\newtheorem{thm}{Theorem}
\newtheorem{defn}{Definition}
\newtheorem{fact}{Fact}
\newtheorem{assump}{Assumption}
\newtheorem{remark}{Remark}
\newtheorem{example}{Example}
\newtheorem{prob}{Problem}
\newtheorem{lemma}{Lemma}
\newcommand{\pmat}[1]{\begin{pmatrix}#1\end{pmatrix}}
\renewcommand{\geq}{\geqslant}
\renewcommand{\leq}{\leqslant}
\newcommand{\abs}[1]{\left\lvert{#1}\right\rvert}
\newcommand{\norm}[1]{\left\lVert#1\right\rVert}
\newcommand{\R}{\mathbb{R}}
\newcommand{\N}{\mathbb{N}}
\newcommand{\is}{i_{s}}
\newcommand{\iu}{i_{u}}
\newcommand{\Sc}{\mathcal{S}}
\newcommand{\V}{\mathcal{T}}
\newcommand{\A}{\mathcal{A}}
\title[]{Scheduling networked control systems\\under jamming attacks}
\author{Atreyee Kundu}
\address{Department of Electrical Engineering,\\Indian Institute of Science Bangalore,\\Bengaluru - 560012, India,\\ E-mail: atreyeek@iisc.ac.in}
\thanks{The author thanks Daniel Quevedo for helpful discussions and comments on an earlier version of this manuscript.}
\keywords{}
\date{\today}
\begin{document}
    \begin{abstract}
	This paper deals with the design of scheduling policies for networked control systems whose shared networks have limited communication capacity and the controller to plant channels are vulnerable to jamming attacks. We assume that among \(N\) plants, only \(M (< N)\) plants can communicate with their controllers at any time instant, and the attack sequences follow an \((m,k)\)-firm model, i.e., in any \(k\) consecutive time instants, the control inputs sent to some or all of the plants accessing the communication network, are deactivated at most at \(m (< k)\) time instants. We devise a new algorithm to allocate the network to the plants periodically such that stability of each plant is preserved under the admissible attack signals. The main apparatus for our analysis is a switched systems representation of the individual plants in an NCS. We rely on matrix commutators (Lie brackets) between the stable and unstable modes of operation of the plants to guarantee stability under our scheduling policies.
\end{abstract}

\maketitle

\section{Introduction}
\label{s:intro}
    Networked Control Systems (NCSs) are spatially distributed control systems in which the communication between plants and their controllers occurs through shared networks. Modern day Cyber-Physical Systems (CPS) and Internet-of-Things (IoT) applications of NCSs typically involve a large number of plants. However, bandwidth of the shared communication network is often limited. The scenario in which the number of plants sharing a communication network is higher than the capacity of the network is called \emph{medium access constraint}. This scenario motivates the need to allocate the communication network to each plant in a manner so that good qualitative properties of the plants are preserved. This task of efficient allocation of a shared communication network is commonly referred to as a \emph{scheduling problem}, and the corresponding allocation scheme is called a \emph{scheduling policy}.

    The problem of designing scheduling policies for NCSs has been addressed widely in the past, see e.g., \cite{Hristu2008,Rehbinder2004,Al-Areqi2015,Quevedo2014,Demirel2018,Miao2017,abc,Peters2016,Ma2019} and the references therein. Under ideal communication, both a plant and its controller receive the intended information whenever the shared network is allocated to them. However, in practical situations, communication networks are often vulnerable to malicious attacks. The study of qualitative properties of NCSs under attacks on the shared network have recently attracted considerable research attention, see e.g., \cite{Srikant2016,Ding2019,Heemels2017,Guo2019} and the references therein. In this paper we consider jamming attacks, where the attacker (jammer) deactivates the control inputs sent to the plants intermittently. This aspect further leads to the requirement of designing scheduling policies that preserve good qualitative properties of the plants resilient to jamming attacks. In this paper we address this requirement.

    We consider an NCS consisting of multiple discrete-time linear plants whose feedback loops are closed through a shared communication network. 
    We assume that the plants are unstable in open-loop and exponentially stable when controlled in closed-loop. Due to a limited communication capacity of the network, only some of the plants (\(M\) out of \(N\)) can exchange information with their controllers at any instant of time. Consequently, the remaining plants operate in open-loop potentially leading to instability. In addition, the channels from the controllers to the plants (more specifically, the controllers to the actuators channels) are vulnerable to jamming attacks. The jammer deactivates the control inputs sent to the plants accessing the shared network. She does not want her presence to be detected by the plants and has certain limitations on the operations of the jamming devices. As a consequence, she chooses to deactivate the control inputs sent to some or all of the plants accessing the shared network only intermittently. In particular, in every \(k\) consecutive time instants, she deactivates control inputs at most at \(m (< k)\) time instants (and hence the flow of control inputs remains unaffected at least at \(k-m\) time instants).\footnote{The jamming attack model considered in this paper is a discrete-time counterpart of the `persistence of excitation' attack model employed in \cite{Srikant2016}. Moreover, the \((m,k)\)-firm model is used widely to characterize data loss signals in NCSs, see e.g., \cite{Jungers2018,Heemels2016}.} Our objective is to design scheduling policies that preserve global exponential stability (GES) of each plant in the NCS under the jamming attacks described above.

    We model the individual plants of an NCS as switched systems, where the switching is between their open-loop (unstable mode) and closed-loop (stable mode) operations. The open-loop operation of a plant occurs at the time instants when it does not have access to the shared network, or it has access to the shared network but its control input is deactivated by the jammer. The closed-loop operation occurs at the time instants when the plant has access to the shared network and its control input is not deactivated by the jammer. Clearly, a switching logic is governed by both the scheduling policy and the attack signal. We design periodic scheduling policies in a manner such that switching logic corresponding to each plant is stabilizing under all admissible attack signals. Towards this end, we ensure a certain number of closed-loop operation instances of a set of plants prior to allocating the network to another set of plants. This number is related to a measure of the rate of decay of the stable dynamics of each plant. We provide conditions on the matrix commutators (Lie brackets) between the open-loop and the closed-loop dynamics of the plants such that stability of all plants in the NCS is preserved under our scheduling policies. The proposed scheduling policies are \emph{static} \cite{abc} in the sense that an allocation sequence of the shared network is computed offline and it is repeated eternally obeying a certain logic.
%

    Recently in \cite{abc} we employed a switched systems modelling of individual plants in an NCS to design stabilizing scheduling policies under ideal communication. We relied on multiple Lyapunov-like functions and graph-theoretic tools for this purpose. In this paper we consider jamming attacks and introduce matrix commutator based tools for the design of stabilizing scheduling policies for NCSs. Our analysis for stability of the individual plants relies on the application of combinatorial arguments to matrix products split into sums. This extends the techniques of \cite{Agrachev2012} to the analysis of simultaneous stability of \(N\) switched systems where the unstable dynamics of each system obeys a certain maximum dwell time. To the best of our knowledge, this is the first instance in the literature where matrix commutation relations between the stable and unstable modes of operations of the individual plants are employed to guarantee stability under scheduling. 

    The remainder of this paper is organized as follows: in \S\ref{s:prob_stat} we formulate the problem under consideration. The tools for our design of scheduling policies and analysis of stability are described in \S\ref{s:prelims}. Our results appear in \S\ref{s:mainres}. We also discuss various features of our results in this section. Numerical experiments are presented in \S\ref{s:numex}. We conclude in \S\ref{s:concln}.

    {\bf Notation}. \(\R\) is the set of real numbers and \(\N\) is the set of natural numbers, \(\N_{0} = \N\cup\{0\}\). \(\norm{\cdot}\) denotes the Euclidean norm (resp., induced matrix norm) of a vector (resp., a matrix). For a finite set \(A\), we employ \(\abs{A}\) to denote its cardinality, i.e., the number of elements in \(A\). For a matrix \(P\), given by a product of matrices \(C_{j}\)'s, \(\abs{P}\) denotes the length of the product, i.e., the number of matrices that appear in \(P\), counting repetitions. For \(a\in\R\), \(\lfloor a\rfloor\) denotes the largest integer less than or equal to \(a\).

\section{Problem statement}
\label{s:prob_stat}
    We consider an NCS with \(N\) plants whose dynamics are given by
    \begin{align}
    \label{e:plants}
        x_{i}(t+1) = A_{i}x_{i}(t) + B_{i}u_{i}(t),\:\:x_{i}(0) = x_{i}^{0},\:\:t\in\N_{0},
    \end{align}
    where \(x_{i}(t)\in\R^{d}\) and \(u_{i}(t)\in\R^{m}\) are the vectors of states and control inputs of the \(i\)-th plant at time \(t\), respectively, \(i=1,2,\ldots,N\). Each plant \(i\) employs a state-feedback controller \(u_{i}(t) = K_{i}x_{i}(t)\), \(t\in\N_{0}\). The matrices \(A_{i}\in\R^{d\times d}\), \(B_{i}\in\R^{d\times m}\) and \(K_{i}\in\R^{m\times d}\), \(i=1,2,\ldots,N\) are constant.
    \begin{assump}
    \label{a:stab_unstab}
    \rm{
        The open-loop dynamics of each plant is unstable and each controller is stabilizing. More specifically, the matrices \(A_{i}+B_{i}K_{i}\), \(i=1,2,\ldots,N\) are Schur stable and the matrices \(A_{i}\), \(i=1,2,\ldots,N\) are unstable.\footnote{A matrix \(A\in\R^{d\times d}\) is Schur stable if all its eigenvalues are inside the open unit disk. We call \(A\) unstable if it is not Schur stable.}
    }
    \end{assump}

    The controllers are remotely located, and each plant communicates with its controller through a shared communication network. We consider that the shared network has the following properties:\\
    (i) It has a limited communication capacity in the sense that at any time instant, only \(M\) plants \((0 < M < N)\) can access the network. Consequently, the remaining \(N-M\) plants operate in open-loop.\\
    (ii) The communication channels from the controllers to the plants are prone to jamming attacks. The jammer corrupts the control input \(u_{i}(t)\) for some or all \(i\in\{1,2,\ldots,N\}\) such that \(i\) is accessing the shared communication network at time \(t\), with a jamming input \(v_{i}(t)\in\{0,1\}\) that enters multiplicatively. In particular, the effect of \(v_{i}(t) = 0\) is that the control input \(u_{i}(t)\) is deactivated (i.e., set to \(0\)) and the \(i\)-th plant operates in open-loop at time \(t\) (even though it has access to the shared network). The jammer follows an \((m,k)\)-firm model, i.e., in any \(k\) consecutive time instants, the attacker selects \(v_{i}(t) = 0\) for some or all \(i\) accessing the shared network at most at \(m\) consecutive time instants, \(m < k\).

    In view of Assumption \ref{a:stab_unstab} and the properties of the shared communication network, each plant in \eqref{e:plants} operates in two modes: (a) stable mode (closed-loop operation) when the plant has access to the shared communication network and its control input is not deactivated by the jammer, and (b) unstable mode (open-loop operation) when the plant does not have access to the shared communication network, or when it has access to the shared network but its control input is deactivated by the jammer.

    Let \(\is\) and \(\iu\) denote the stable and unstable modes of the \(i\)-th plant, respectively, \(A_{\is} = A_{i}+B_{i}K_{i}\) and \(A_{\iu} = A_{i}\), \(i=1,2,\ldots,N\). We let \(\Sc\)
    be the set of all vectors that consist of \(M\) distinct elements from the set \(\{1,2,\ldots,N\}\). We call a function \(\gamma:\N_{0}\to\Sc\) that specifies, at every time \(t\), \(M\) plants of the NCS which has access to the shared network at that time, as a \emph{scheduling policy}. There exists a diverging sequence of times \(0=:\tau_{0}<\tau_{1}<\tau_{2}<\cdots\) and a sequence of indices \(s_{0}\), \(s_{1}\), \(s_{2},\ldots\) with \(s_{j}\in\Sc\), \(j=0,1,2,\ldots\) such that \(\gamma(t) = s_{j}\) for \(t\in[\tau_{j},\tau_{j+1}[\), \(j=0,1,2,\ldots\). Let \(\V\) be the set of all \(N\)-dimensional vectors whose at most \(M\) elements are \(0\) and the remaining elements are \(1\).\footnote{Notice that \(\V\) contains the \(N\)-dimensional vector whose all elements are \(1\).} We let \(\kappa:\N_{0}\to\V\) denote a jamming attack signal, defined as \(\kappa(t) = \pmat{v_{1}(t) & v_{2}(t) & \cdots & v_{N}(t)}^\top\), where \(v_{\ell}(t) = 0\) for at most \(M\) elements \(\ell\in\{1,2,\ldots,N\}\) such that \(\ell\) is an element of \(\gamma(t)\), and \(v_{\ell}(t) = 1\) for the remaining elements of \(\kappa(t)\). We call \(\kappa\) admissible if it obeys the \((m,k)\)-firm model, \(m,k\) given. Let \(\A(m,k)\) denote the set of all admissible \(\kappa\). Fix \(t\in\N_{0}\). Let \(\gamma_{\ell}(t)\) denote the \(\ell\)-th element of \(\gamma\) at time \(t\), \(\ell = 1,2,\ldots,M\). A plant \(i\in\{\gamma_{1}(t),\gamma_{2}(t),\ldots,\gamma_{M}(t)\}\) operates in closed-loop at time \(t\), if \(v_{i}(t) = 1\), and operates in open-loop at time \(t\), if \(v_{i}(t) = 0\). All plants \(j\in\{1,2,\ldots,N\}\setminus\{\gamma_{1}(t),\gamma_{2}(t),\ldots,\gamma_{M}(t)\}\) operate in open-loop at time \(t\).

    We will solve the following problem:
    \begin{prob}
    \label{prob:mainprob}
    \rm{
        Given the matrices \(A_{i}\), \(B_{i}\), \(K_{i}\), \(i=1,2,\ldots,N\) and the numbers \(M\), \(m\) and \(k\), design scheduling policies \(\gamma\) that preserve stability of each plant \(i\) in \eqref{e:plants} under all admissible attack signals \(\kappa\in\A(m,k)\).
    }
    \end{prob}
    In particular, we are interested in global exponential stability (GES) of the plants:
    \begin{defn}{\cite[\S 2]{Agrachev2012}}
    \label{d:ges}
    \rm{
        We call the \(i\)-th plant in \eqref{e:plants} GES for a given scheduling policy \(\gamma\), if there exist positive numbers \(c_{i}\) and \(\lambda_{i}\) such that for arbitrary choice of the initial condition \(x_{i}^{0}\), the following inequality holds:
        \begin{align}
        \label{e:ges}
            \norm{x_{i}(t)}\leq c_{i}e^{-\lambda_{i}t}\norm{x_{i}^{0}}\:\text{for all}\:t\in\N_{0}\:\text{and all}\:\kappa\in\A(m,k).
        \end{align}
    }
    \end{defn}

    In the sequel we will call a \(\gamma\) that preserves GES of each plant in \eqref{e:plants} as a \emph{stabilizing scheduling policy}. Prior to presenting our solution to Problem \ref{prob:mainprob}, we catalog a set of preliminaries.
\section{Preliminaries}
\label{s:prelims}
\subsection{Individual plants and switched systems}
\label{ss:plants-swsys}
    Fix \(i\in\{1,2,\ldots,N\}\). The dynamics of plant \(i\) can be expressed as a switched system \cite[\S1.1.2]{Liberzon}
   \begin{align}
   \label{e:i-swsys}
    x_{i}(t+1) = A_{\sigma_{i}(t)}x_{i}(t),\:x_{i}(0) = x_{i}^{0},\:t\in\N_{0},
   \end{align}
   where the subsystems are \(\{A_{\is},A_{\iu}\}\) and a switching logic \(\sigma:\N_{0}\to\{\is,\iu\}\) satisfies
   \begin{align*}
        \sigma_{i}(t) =
        \begin{cases}
            \is,\:&\text{if}\:i\:\text{is an element of \(\gamma(t)\) and \(v_{i}(t) = 1\)},\\
            \iu,\:&\text{if}\:i\:\text{is not an element of \(\gamma(t)\)},\\&\:\:\text{or \(i\) is an element of \(\gamma(t)\) and \(v_{i}(t) = 0\)}.
        \end{cases}
   \end{align*}
\subsection{The stable mode of plant \(i\)}
\label{ss:i-stab_mode}
    The following fact is immediate from the properties of Schur stable matrices:
    \begin{fact}
    \label{fact:key}
    \rm{
        There exist \(\N\ni\delta\geq m\) and \(\rho\in]0,1[\) such that
        \begin{align}
        \label{e:key}
            \norm{A_{\is}^{\delta}}\leq\rho\:\:\text{for all}\:i=1,2,\ldots,N.
        \end{align}
    }
    \end{fact}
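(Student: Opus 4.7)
The plan is to exploit the finiteness of the collection $\{A_{\is}\}_{i=1}^{N}$ together with the standard asymptotic behaviour of powers of Schur stable matrices.

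First I would fix any target contraction rate $\rho \in ]0,1[$ in advance (its precise value is unimportant for the statement; it will be recycled as the desired $\rho$ in \eqref{e:key}). For each $i \in \{1,\ldots,N\}$, Assumption \ref{a:stab_unstab} makes $A_{\is}$ Schur stable, so its spectral radius is strictly less than one. By Gelfand's formula (equivalently, by a Jordan decomposition yielding the bound $\norm{A_{\is}^{n}} \leq C_{i}\mu_{i}^{n}$ for some $C_{i}>0$ and any $\mu_{i}\in(\rho(A_{\is}),1)$), the sequence $\norm{A_{\is}^{n}}$ tends to $0$ as $n\to\infty$.

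Second, this convergence gives, for each $i$, a threshold $n_{i}\in\N$ such that $\norm{A_{\is}^{n}} \leq \rho$ for every $n \geq n_{i}$. Since the index set $\{1,\ldots,N\}$ is finite, I can define
\[
    \delta \Let \max\{m,\, n_{1},\, n_{2},\, \ldots,\, n_{N}\},
\]
which is a well-defined positive integer satisfying $\delta \geq m$. For every $i$ one has $\delta \geq n_{i}$, hence $\norm{A_{\is}^{\delta}} \leq \rho$ uniformly in $i$, which is precisely \eqref{e:key}.

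There is no substantive obstacle here; the statement amounts to the observation that finitely many Schur stable matrices admit a common power that contracts all of them below a preassigned level, combined with the trivial freedom to inflate that power further so that it also exceeds the fixed integer $m$. The only minor subtlety worth flagging in the write-up is that $\norm{A^{n}}$ need not be monotone in $n$, which is why the argument invokes a tail bound \emph{for all} $n \geq n_{i}$ rather than comparing a single power.
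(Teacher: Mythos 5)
Your argument is correct and matches the paper's intended reasoning: the paper treats the fact as immediate from Schur stability and, in Remark \ref{rem:delta}, sketches exactly your construction of taking $\delta$ as the maximum of per-plant thresholds (inflated to be at least $m$) over the finite family. Your added caveat about the non-monotonicity of $\norm{A_{\is}^{n}}$, handled via the tail bound for all $n\geq n_{i}$, is a sound refinement of the same approach.
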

    \begin{remark}
    \label{rem:delta}
    \rm{
        Since the matrices \(A_{\is}\), \(i=1,2,\ldots,N\) are Schur stable, it follows that there exist \(\N\ni\delta_{i}\geq m\), \(i=1,2,\ldots,N\), such that \(\norm{A_{\is}^{\hat{\delta}_{i}}}< 1\) for all \(\hat{\delta}_{i}\geq\delta\), \(i=1,2,\ldots,N\). A natural choice of \(\displaystyle{\delta = \max_{i=1,2,\ldots,N}\delta_{i}}\).
    }
    \end{remark}
    Our design of stabilizing scheduling policies will involve \(\delta\)-many closed-loop operation instances of a set of \(M\) plants prior to assigning the network to a different set of plants.\\
\subsection{The attack signals}
\label{ss:attack-seq}
    We provide an estimate of the maximum number of time instants required for at least \(\delta\)-many closed-loop operation instances of the plants accessing the shared network under an admissible jamming attack signal.
    \begin{lemma}
    \label{lem:attack}
    \rm{
        Fix \(i\in\{1,2,\ldots,N\}\). Let \(\delta, m, k\in\N\), \(m < k\), be given, and \(t_{1}\), \(t_{2},\ldots,t_{T}\) be the consecutive time instants such that for each \(p=1,2,\ldots,T\), there exists \(\ell\in\{1,2,\ldots,M\}\) such that \(\gamma_{\ell}(t_{p}) = i\). Suppose that there exist \(\delta\)-many \(t_{p}\) such that \(v_{i}(t_{p}) = 0\), \(p\in\{1,2,\ldots,T\}\). Then
        \(\displaystyle{
            T \leq \bigl(\big\lfloor\frac{\delta}{m}\big\rfloor+1\bigr)(k-m)+\delta.}
        \)
        }
    \end{lemma}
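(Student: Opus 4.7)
My plan is to argue by a direct gap-counting argument on the jamming signal restricted to \([t_{1},t_{T}]\). List the \(\delta\) jammed instants in chronological order as \(\tau_{1}<\tau_{2}<\cdots<\tau_{\delta}\), and define the non-jammed gaps \(g_{0}=\tau_{1}-t_{1}\), \(g_{j}=\tau_{j+1}-\tau_{j}-1\) for \(1\le j\le\delta-1\), and \(g_{\delta}=t_{T}-\tau_{\delta}\). Since the total number of non-jammed instants among \(t_{1},\ldots,t_{T}\) equals \(\sum_{j=0}^{\delta}g_{j}=T-\delta\), the desired bound amounts to showing \(\sum_{j=0}^{\delta}g_{j}\le(q+1)(k-m)\), where \(q=\lfloor\delta/m\rfloor\).

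The key estimate I would extract from the \((m,k)\)-firm property is that any \(m\) consecutive gaps satisfy \(g_{j}+g_{j+1}+\cdots+g_{j+m-1}\le k-m\). This is obtained by applying the \((m,k)\)-firm constraint to the \(k\)-window \([\tau_{j}+1,\tau_{j}+k]\) to conclude \(\tau_{j+m}-\tau_{j}\le k\); subtracting the \(m\) jammings \(\tau_{j+1},\ldots,\tau_{j+m}\) from this span yields the claimed bound on the sum of the intermediate gaps. Analogous boundary estimates at the \(k\)-windows touching \(t_{1}\) and \(t_{T}\) supply \(g_{0}+\cdots+g_{m-1}\le k-m\) and \(g_{\delta-m+1}+\cdots+g_{\delta}\le k-m\).

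Writing \(\delta=qm+r\) with \(0\le r<m\), I would then cover the index set \(\{0,1,\ldots,\delta\}\) of \(\delta+1\) gap indices by \(q+1\) length-\(m\) blocks: the start block, the end block, and \(q-1\) interior blocks. When \(\delta+1\) is not a multiple of \(m\), a small overlap between the last interior block and the end block is unavoidable, but non-negativity of the \(g_{j}\) lets me sum the per-block \(k-m\) estimate over the \(q+1\) blocks to obtain \(\sum_{j=0}^{\delta}g_{j}\le(q+1)(k-m)\), and hence \(T=\delta+\sum_{j=0}^{\delta}g_{j}\le\delta+(q+1)(k-m)\), as claimed.

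The main obstacle I anticipate is the degenerate small-\(\delta\) case \(\delta<m\) (so \(q=0\)), where no length-\(m\) block of gap indices is meaningful and the above tiling does not apply verbatim; in that case I would argue directly from the single \(k\)-window covering \([t_{1},t_{T}]\) that \(T\le k-m+\delta\). A related bookkeeping subtlety lies in verifying the tiling and overlap absorption carefully, distinguishing the sub-cases \(r=0\) and \(r>0\), and also checking the edge case \(q=1\) where the tiling collapses to just a start block and an end block that already cover everything.
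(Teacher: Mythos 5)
Your gap-counting argument is structurally the same as the paper's: the paper splits $T=T_{1}+T_{2}$ and charges at most $k$ access instants to each of the $\big\lfloor\delta/m\big\rfloor$ full batches of $m$ flagged instants and at most $(k-m)+\bigl(\delta-\big\lfloor\delta/m\big\rfloor m\bigr)$ to the remainder, whereas you charge $k-m$ to each of $\big\lfloor\delta/m\big\rfloor+1$ overlapping blocks of $m$ consecutive gaps; both amortizations yield exactly $\bigl(\big\lfloor\delta/m\big\rfloor+1\bigr)(k-m)+\delta$, and your check that $\lceil(\delta+1)/m\rceil\leq\big\lfloor\delta/m\big\rfloor+1$ is the same arithmetic as the paper's decomposition of $\delta$. (The small-$\delta$ case you worry about is moot, since Fact \ref{fact:key} already imposes $\delta\geq m$.)

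The step I would push back on is your derivation of the key window estimate. From ``at most $m$ instants with $v_{i}=0$ in any $k$ consecutive time instants'' you cannot conclude $\tau_{j+m}-\tau_{j}\leq k$; that hypothesis gives the \emph{reverse} inequality $\tau_{j+m}-\tau_{j}\geq k$, because a window of length $k$ cannot contain the $m+1$ jammed instants $\tau_{j},\ldots,\tau_{j+m}$. An ``at most'' constraint on jamming can only bound $T$ from below: a jammer who attacks once every $2k$ steps is admissible and still eventually produces $\delta$ jammed instants, with $T$ far exceeding $\alpha(\delta)$. What your tiling actually needs is that the instants being counted occur \emph{at least} $m$ times in every $k$-window --- i.e., the classical $(m,k)$-firm reading ``at least $m$ out of every $k$ transmissions succeed,'' applied to the closed-loop instants $v_{i}(t_{p})=1$ rather than to the jammed ones; that is also the only reading consistent with how the lemma is used later (to guarantee $\delta$ closed-loop operations within $\alpha(\delta)$ access instants). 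To be fair, the paper's own proof makes the identical move: the bounds $T_{1}\leq\big\lfloor\delta/m\big\rfloor k$ and $T_{2}\leq(k-m)+\bigl(\delta-\big\lfloor\delta/m\big\rfloor m\bigr)$ are asserted only ``from the properties of an attack signal'' and likewise require the ``at least $m$ per window'' direction. So you have faithfully reconstructed the intended argument, but the justification you give for the window estimate does not follow from the constraint as stated. A smaller point: your last gap $g_{\delta}=t_{T}-\tau_{\delta}$ is controlled only if $t_{T}$ is taken to be the $\delta$-th flagged instant, which the paper assumes implicitly; otherwise $T$ is unbounded regardless of the rest of the argument.
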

    \begin{proof}
        We rewrite \(\delta\) as
        \(
            \big\lfloor\frac{\delta}{m}\big\rfloor m + \biggl(\delta - \big\lfloor\frac{\delta}{m}\big\rfloor m\biggr).
        \)
        Let \(T = T_{1} + T_{2}\), where \(t_{1},t_{2},\ldots,t_{T_{1}}\) be the time instants that contain \(\big\lfloor\frac{\delta}{m}\big\rfloor m\)-many \(t_{p}\) such that \(v_{i}(t_{p}) = 0\), \(p\in\{1,2,\ldots,T_{1}\}\) and \(t_{T_{1}+1},t_{T_{1}+2},\ldots,t_{T_{1}+T_{2}}\) be the time instants that contain \(\biggl(\delta-\big\lfloor\frac{\delta}{m}\big\rfloor m\biggr)\)-many \(t_{p}\) such that \(v_{i}(t_{p}) = 0\), \(p\in\{T_{1}+1,T_{1}+2,\ldots,T_{1}+T_{2}\}\).

        From the properties of an attack signal, we have
        \[
            T_{1}\leq\big\lfloor\frac{\delta}{m}\big\rfloor k\:\:\text{and}\:\:T_{2}\leq (k-m)+\biggl(\delta-\big\lfloor\frac{\delta}{m}\big\rfloor m\biggr).
        \]
        Consequently,
        \begin{align*}
            T = T_{1} + T_{2} &\leq \big\lfloor\frac{\delta}{m}\big\rfloor k + (k-m) + \delta - \big\lfloor\frac{\delta}{m}\big\rfloor m\\
            & =\big\lfloor\frac{\delta}{m}\big\rfloor (k-m) + (k-m) + \delta\\
            & =\bigl(\big\lfloor\frac{\delta}{m}\big\rfloor+1\bigr)(k-m) + \delta.
        \end{align*}
    \end{proof}
\subsection{Definitions}
\label{ss:defn+term}
    We let the commutators of the matrices \(A_{\is}\) and \(A_{\iu}\), \(i=1,2,\ldots,N\) be given by
    \begin{align}
    \label{e:comm}
        E_{\is\iu} := A_{\is}A_{\iu} - A_{\iu}A_{\is}.
    \end{align}
    Let
    \begin{align}
    \label{e:r_defn} r &:= \begin{cases}
        \lfloor\frac{N}{M}\rfloor,\:\:\text{if}\:N-\lfloor\frac{N}{M}\rfloor M = 0,\\
        \lfloor\frac{N}{M}\rfloor+1,\:\:\text{otherwise},
    \end{cases}
    \end{align}
    \begin{align}
    \label{e:alpha_defn} \alpha(\delta) &:= \biggl(\bigl\lfloor\frac{\delta}{m}\bigr\rfloor+1\biggr)(k-m)+\delta,\\
    \label{e:zeta_defn} \zeta(\delta) &:= r\alpha(\delta) - \delta - 1,\\
    \intertext{and}
    \label{e:eta_defn} \eta(\delta) &:= m\biggl(\frac{\bigl\lfloor\frac{\delta}{m}\bigr\rfloor\bigl(\bigl\lfloor\frac{\delta}{m}\bigr\rfloor+1\bigr)}{2}(k-m)+(r-1)\alpha(\delta)\biggr)\nonumber\\
    &\hspace*{-0.5cm}+\biggl(\delta-\bigl\lfloor\frac{\delta}{m}\bigr\rfloor m\biggr)\biggl(\bigl(\bigl\lfloor\frac{\delta}{m}\bigr\rfloor+1\bigr)(k-m)+(r-1)\alpha(\delta)\biggr).
    \end{align}

    We are now in a position to present our results.
\section{Results}
\label{s:mainres}
    We first present an algorithm that constructs scheduling policies. We will then provide sufficient conditions on the matrices \(A_{\is}\) and \(A_{\iu}\), \(i=1,2,\ldots,N\), such that a scheduling policy obtained from our algorithm preserves GES of all plants in \eqref{e:plants}.
    \begin{algorithm}[htbp]
			\caption{Construction of scheduling policies} \label{algo:scheduling}
		\begin{algorithmic}[1]
			\renewcommand{\algorithmicrequire}{\textbf{Input:}}
			\renewcommand{\algorithmicensure}{\textbf{Output:}}
			
			\REQUIRE The numbers \(M\), \(N\), \(k\), \(m\) and \(\delta\)
			\ENSURE A scheduling policy \(\gamma\)
			
            \hspace*{-0.6cm}Step I: Create distinct vectors \(v\in\Sc\) until each element of \(\{1,2,\ldots,N\}\) appears once.
                \STATE Set \(p=0\) and \(V=\emptyset\).
                \WHILE {\(\{1,2,\ldots,N\}\setminus V \neq \emptyset\)}
                    \IF {\(\abs{\{1,2,\ldots\}\setminus V}\geq M\)}
                        \STATE Set \(p=p+1\).
                        \STATE \text{Pick \(M\) elements from} \(\{1,2,\ldots,N\}\setminus V\) to create set \(v_{p}\).
                    \ELSE
                        \STATE Set \(p=p+1\).
                        \STATE Set \(\overline{v}_{p} = \{1,2,\ldots,N\}\subset V\).
                        \STATE Pick \(M-\abs{\overline{v}_{p}}\) elements from \(V\) to create \({v'}_{p}\).
                        \STATE Set \(v_{p} = \overline{v}_{p}\cup{v'}_{p}\).
                    \ENDIF
                    \STATE Set \(V = V\cup v_{p}\).
                \ENDWHILE

            \hspace*{-0.6cm}Step II: Construct a scheduling policy \(\gamma\) from \(v_{p}\), \(p=1,2,\ldots,r\) generated in Step I.
                \STATE Set \(b=0\) and \(\tau_{0} = 0\).
                \FOR {\(q=br,br+1,\ldots,(b+1)r-1\)}\label{step:repeat_step}
                    \STATE Set \(\gamma(\tau_{q}) = v_{q-br+1}\) and \(\tau_{q+1} = \tau_{q}+\alpha(\delta)\).
                    \STATE Output \(\tau_{q}\) and \(\gamma(\tau_{q})\).
                \ENDFOR
                \STATE Set \(b = b + 1\) and go to \ref{step:repeat_step}.			
		\end{algorithmic}
	\end{algorithm}

    Algorithm \ref{algo:scheduling} constructs a scheduling policy \(\gamma\) in two steps: In Step I, distinct vectors of \(M\) plants, \(v_{p}\), \(p=1,2,\ldots,r\) are created until each plant \(i\in\{1,2,\ldots,N\}\) appears at least once. Clearly, we obtain a total of \(r\) vectors, where \(r\) is as defined in \eqref{e:r_defn}. In Step II, a scheduling policy \(\gamma\) is constructed by employing \(v_{p}\), \(p=1,2,\ldots,r\). Each \(v_{p}\) is assigned to \(\gamma\) consecutively for \(\alpha(\delta)\) units of time. In view of Lemma \ref{lem:attack} this assignment ensures that the corresponding set of \(M\) plants operates in closed-loop for at least \(\delta\) units of time, where \(\delta\) is as described in Fact \ref{fact:key}.

    The following theorem provides sufficient conditions on the matrices \(A_{\is}\) and \(A_{\iu}\), \(i=1,2,\ldots,N\) under which a scheduling policy obtained from Algorithm \ref{algo:scheduling} preserves GES of each plant \(i\) in \eqref{e:plants}.
    \begin{thm}
    \label{t:mainres}
        Consider an NCS described in \S\ref{s:prob_stat}. Let the numbers \(M\), \(N\), \(k\) and \(m\) be given, and \(\lambda_{i}\), \(i=1,2,\ldots,N\) be arbitrary positive numbers satisfying
        \begin{align}
        \label{e:maincondn1}
            \rho e^{\lambda_{i}\delta} < 1.
        \end{align}
        Suppose that there exist scalars \(\varepsilon_{i}\), \(i=1,2,\ldots,N\) small enough such that the following set of conditions holds:
        \begin{align}
        \label{e:maincondn2}
            \norm{E_{\is\iu}}\leq\varepsilon_{i},
        \end{align}
        and
        \begin{align}
        \label{e:maincondn3}
            \rho e^{\lambda_{i}\delta} + \eta(\delta)\norm{A_{\is}}^{\delta-1}\norm{A_{\iu}}^{\zeta(\delta)}\varepsilon_{i}e^{\lambda_{i}r\alpha(\delta)}\leq 1.
        \end{align}
        Then a scheduling policy \(\gamma\) obtained from Algorithm \ref{algo:scheduling} preserves GES of each plant \(i\) in \eqref{e:plants}.
    \end{thm}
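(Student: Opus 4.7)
The plan is to bound the one-period state-transition matrix $\Phi_{i}$ of each plant $i$ under a scheduling policy produced by Algorithm \ref{algo:scheduling}, and then to iterate over periods. Fix $i$ and set $T:=r\alpha(\delta)$. Since $\gamma$ cycles through the $r$ vectors $v_{1},\ldots,v_{r}$ produced in Step~I, each held for $\alpha(\delta)$ time steps, plant $i$ is scheduled in exactly one contiguous window of length $\alpha(\delta)$ per period; Lemma \ref{lem:attack} guarantees that at least $\delta$ of the factors inside this window are $A_{\is}$, while every position outside the window carries $A_{\iu}$. Thus $x_{i}(T)=\Phi_{i}x_{i}^{0}$, with $\Phi_{i}$ a length-$T$ product of matrices from $\{A_{\is},A_{\iu}\}$ containing (in the worst case) exactly $\delta$ stable factors and $T-\delta$ unstable ones.

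The analytic engine, following \cite{Agrachev2012}, is the elementary identity
\[
    A_{\iu}A_{\is} \;=\; A_{\is}A_{\iu} \;-\; E_{\is\iu},
\]
which swaps an adjacent unstable-stable pair at the price of one correction term carrying a factor $E_{\is\iu}$. Repeatedly applying the identity to move every $A_{\is}$ past every $A_{\iu}$ on its left produces, modulo higher-order terms in $\varepsilon_{i}$, the splitting
\[
    \Phi_{i} \;=\; A_{\is}^{\delta} A_{\iu}^{T-\delta} \;+\; \sum_{j} C_{j},
\]
where each $C_{j}$ is a length-$(T-1)$ product consisting of $\delta-1$ factors $A_{\is}$, $\zeta(\delta)=T-\delta-1$ factors $A_{\iu}$ and a single $E_{\is\iu}$. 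By \eqref{e:maincondn2} this gives $\norm{C_{j}}\leq\norm{A_{\is}}^{\delta-1}\norm{A_{\iu}}^{\zeta(\delta)}\varepsilon_{i}$, while Fact \ref{fact:key} handles the sorted main term via $\norm{A_{\is}^{\delta}}\leq\rho$.

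The principal combinatorial step is to show that the number of correction terms is at most $\eta(\delta)$. This count is the maximum, taken over admissible attack signals in $\A(m,k)$ and over the plant's position $p_{i}\in\{1,\ldots,r\}$ in the schedule, of the number of adjacent $A_{\iu}$--$A_{\is}$ inversions that must be resolved to reach the sorted form. The closed form \eqref{e:eta_defn} reflects the extremal attack that groups its $\lfloor\delta/m\rfloor$ full bursts of $m$ consecutive deactivations, together with the residual $\delta-\lfloor\delta/m\rfloor m$ deactivations, inside the window so as to maximise each stable factor's displacement from the front of the period. Rigorously verifying this worst-case count, and carefully separating it from contributions outside the window of length $(r-1)\alpha(\delta)$, is the main technical obstacle.

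Combining the main-term bound from Fact \ref{fact:key} with the correction estimate, and invoking conditions \eqref{e:maincondn1} and \eqref{e:maincondn3}, yields $\norm{\Phi_{i}}\leq e^{-\lambda_{i}T}$. Propagating this contraction over the infinitely many periods, and absorbing the growth inside any fractional period into a constant $c_{i}$ depending only on $\norm{A_{\is}}$, $\norm{A_{\iu}}$ and $T$, produces \eqref{e:ges} at the prescribed rate $\lambda_{i}$. Repeating the argument for every $i\in\{1,\ldots,N\}$ completes the proof.
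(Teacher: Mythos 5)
Your overall strategy---establish a per-period contraction \(\norm{\Phi_{i}}\leq e^{-\lambda_{i}r\alpha(\delta)}\) and iterate---is not the paper's strategy, and it contains a gap that the stated hypotheses cannot close. After sorting, your main term is \(A_{\is}^{\delta}A_{\iu}^{r\alpha(\delta)-\delta}\), whose norm Fact \ref{fact:key} only lets you bound by \(\rho\norm{A_{\iu}}^{r\alpha(\delta)-\delta}\); since \(A_{\iu}\) is unstable this extra factor typically far exceeds \(1\), and neither \eqref{e:maincondn1} nor \eqref{e:maincondn3} contains any term of the form \(\norm{A_{\iu}}^{r\alpha(\delta)-\delta}\) multiplying \(\rho\). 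Already in the scalar commuting case \(A_{\is}=a<1\), \(A_{\iu}=b>1\), the exact one-period product is \(a^{\delta}b^{r\alpha(\delta)-\delta}\), which need not be a contraction even though \eqref{e:maincondn1}--\eqref{e:maincondn3} hold with \(\varepsilon_{i}=0\). So the inequality \(\norm{\Phi_{i}}\leq e^{-\lambda_{i}r\alpha(\delta)}\) that you propose to propagate does not follow from the assumptions. The paper avoids this by an induction on the length of the matrix product in the style of \cite{Agrachev2012}: writing \(\overline{W}_{i}=L_{i}R_{i}\) with \(\abs{L_{i}}=r\alpha(\delta)\) and \(L_{i}=A_{\is}^{\delta}L_{i}^{(1)}+L_{i}^{(2)}\), it estimates \(\norm{A_{\is}^{\delta}L_{i}^{(1)}R_{i}}\leq\rho\,\norm{L_{i}^{(1)}R_{i}}\leq\rho\,c_{i}e^{-\lambda_{i}(\abs{\overline{W}_{i}}-\delta)}\), i.e., the \(r\alpha(\delta)-\delta\) unsorted factors remaining in the main term are absorbed into the \emph{inductively assumed} exponential bound \eqref{e:ges2} on the shorter product \(L_{i}^{(1)}R_{i}\), rather than being paid for with \(\norm{A_{\iu}}^{r\alpha(\delta)-\delta}\). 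That induction-on-length step is precisely what makes the left-hand side of \eqref{e:maincondn3} appear, and it is the essential idea your proposal is missing.

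Two smaller points. First, you do not need to argue ``modulo higher-order terms in \(\varepsilon_{i}\)'': the decomposition is exact if one stops sorting a term as soon as a single commutator \(E_{\is\iu}\) has been produced and leaves that correction term unsorted; every correction then contains exactly one \(E_{\is\iu}\), \(\delta-1\) factors \(A_{\is}\) and \(\zeta(\delta)\) factors \(A_{\iu}\), and no terms with two or more commutators ever arise. Second, your count of at most \(\eta(\delta)\) correction terms, the composition of each term, and the use of Lemma \ref{lem:attack} to guarantee \(\delta\) stable factors per allocation window all agree with the paper; those parts of the proposal are sound.
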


    \begin{proof}
        Fix a scheduling policy \(\gamma\) obtained from Algorithm \ref{algo:scheduling}. We will show that if conditions \eqref{e:maincondn1}-\eqref{e:maincondn3} hold, then \(\gamma\) preserves GES of each plant \(i\) in \eqref{e:plants} under all \(\kappa\in\A(m,k)\).

        Fix \(i\in\{1,2,\ldots,N\}\). The solution to \eqref{e:i-swsys} is given by
        \[
            x_{i}(t) = A_{\sigma_{i}(t-1)}\cdots A_{\sigma_{i}(1)}A_{\sigma_{i}(0)}x_{i}(0),\:\:t\in\N.
        \]
        Let \(W_{i}\) denote the matrix product on the right-hand side of the above equality. We let \(\overline{W}_{i}\) denote an initial segment of \(W_{i}\). Then condition \eqref{e:ges} can be writte equivalently as \cite[\S 2]{Agrachev2012}: for every initial segment \(\overline{W}_{i}\) of \(W_{i}\), we have
        \begin{align}
        \label{e:ges2}
            \norm{\overline{W}_{i}}\leq c_{i}e^{-\lambda_{i}\abs{\overline{W}_{i}}}\:\text{for all}\:\kappa\in\A(m,k).
        \end{align}
        It, therefore, suffices to show that condition \eqref{e:ges2} is true for \(\gamma\) under all admissible \(\kappa\). We will employ mathematical induction on \(\overline{W}_{i}\) to establish \eqref{e:ges2}.

        {\it A. Induction basis}: Pick \(c_{i}\) large enough so that \eqref{e:ges2} holds for all \(\overline{W}_{i}\) satisfying \(\overline{W}_{i}\leq r\alpha(\delta)\). Since there are only a finite number of such products corresponding to different \(\kappa\in\A(m,k)\), such a \(c_{i}\) can always be chosen.

        {\it B. Induction hypothesis}: Let \(\overline{W}_{i}\geq r\alpha(\delta)+1\) and assume that \eqref{e:ges2} is proved for all products of length less than \(\abs{\overline{W}_{i}}\).

        {\it C. Induction step}: Let \(W_{i} = L_{i}R_{i}\), where \(\abs{L_{i}} = r\alpha(\delta)\). We observe that \(L_{i}\) contains at least \(\delta\)-many \(A_{\is}\)'s. Indeed, by Lemma \ref{lem:attack}, it requires at most \(\alpha(\delta)\) time instants to ensure \(\delta\)-many closed-loop operations of the plant \(i\), and \(i\) appears in at least one \(v_{p}\), \(p\in\{1,2,\ldots,r\}\).

        We rewrite \(L_{i}\) as \(L_{i} = A_{\is}^{\delta}L_{i}^{(1)}+L_{i}^{(2)}\), where \(\abs{L_{i}^{(1)}} = r\alpha(\delta)-\delta\), and \(L_{i}^{(2)}\) contains at most \(\eta(\delta)\) terms of length \(r\alpha(\delta)-1\) each containing \(\zeta(\delta)\) \(A_{\iu}\)'s, \(\delta-1\) \(A_{\is}\)'s and \(1\) \(E_{\is\iu}\). Indeed, in the worst case, the following exchanges between the matrices \(A_{\is}\) and \(A_{\iu}\) are required to arrive at the desired structure of \(L_{i}\):
        \begin{itemize}[label = \(\circ\), leftmargin = *]
            \item \(m\)-many \(A_{\is}\)'s with \(k-m+(r-1)\alpha(\delta)\) \(A_{\iu}\)'s,
            \item \(m\)-many \(A_{\is}\)'s with \(2(k-m)+(r-1)\alpha(\delta)\) \(A_{\iu}\)'s,
            \item \(\cdots\),
            \item \(m\)-many \(A_{\is}\)'s with \(\lfloor\frac{\delta}{m}\rfloor(k-m)+(r-1)\alpha(\delta)\) \(A_{\iu}\)'s, and
            \item \(\biggl(\delta-\lfloor\frac{\delta}{m}\rfloor m\biggr)\)-many \(A_{\is}\)'s with \(\biggl(\lfloor\frac{\delta}{m}\rfloor+1\biggr)(k-m)+(r-1)\alpha(\delta)\) \(A_{\iu}\)'s.
        \end{itemize}

        Now, from the sub-multiplicativity and sub-additivity properties of the induced Euclidean norm, we have
        \begin{align}
            &\norm{\overline{W}_{i}}=\norm{L_{i}R_{i}}=\norm{\bigl(A_{\is}^{\delta}L_{i}^{(1)}+L_{i}^{(2)}\bigr)R_{i}}\nonumber\\
            &\leq\norm{A_{\is}^{\delta}}\norm{L_{i}^{(1)}R_{i}}+\norm{L_{i}^{(2)}}\norm{R_{i}}\nonumber\\
            &\leq\rho c_{i}e^{-\lambda_{i}(\abs{\overline{W}_{i}}-\delta)}+\eta(\delta)\norm{A_{\is}}^{\delta-1}\norm{A_{\iu}}^{\zeta(\delta)}
            \times\varepsilon_{i}c_{i}e^{-\lambda_{i}\bigl(\abs{\overline{W}_{i}}-(r-1)\alpha(\delta)\bigr)}\nonumber\\
            &=c_{i}e^{-\lambda_{i}\abs{\overline{W}_{i}}}\biggl(\rho e^{\lambda_{i}\delta}+\eta(\delta)\norm{A_{\is}}^{\delta-1}\norm{A_{\iu}}^{\zeta(\delta)}
            \times\varepsilon_{i}e^{\lambda_{i}r\alpha(\delta)}\biggr),\label{e:pf1_step2}
        \end{align}
        where the upper bounds on \(\norm{L_{i}^{(1)}R_{i}}\) and \(\norm{R_{i}}\) are obtained from the relations \(\abs{\overline{W}_{i}}=\abs{A_{\is}^{\delta}}+\abs{L_{i}^{(1)}R_{i}}\) and \(\abs{\overline{W}_{i}}=\abs{L_{i}}+\abs{R_{i}}\), respectively. Applying \eqref{e:maincondn3} to \eqref{e:pf1_step2} leads to \eqref{e:ges2}. Consequently, \eqref{e:i-swsys} is GES under \(\gamma\) for all \(\kappa\in\A(m,k)\).

        Recall that \(i\in\{1,2,\ldots,N\}\) was selected arbitrarily. It follows that the assertion of Theorem \ref{t:mainres} holds for all \(i\) in \eqref{e:plants}.
    \end{proof}

    For an NCS consisting of \(N\) discrete-time linear plants that are open-loop unstable and closed-loop stable and a shared communication network that allows access only to \(M\:(< N)\) plants at every time instant and is vulnerable to a maximum of \(m\) jamming attacks in every \(k\) consecutive time instants, Algorithm \ref{algo:scheduling} constructs a scheduling policy, and Theorem \ref{t:mainres} provides sufficient conditions for such a policy to be stabilizing. {S}ince \(\delta\geq m\) and \(\rho < 1\), there always exists a positive number \(\lambda_{i}\) (could be very small) such that \eqref{e:maincondn1} holds. For {a} plant \(i\) {to be stable under a scheduling policy \(\gamma\)}, we rely on the existence of a small enough scalar \(\varepsilon_{i}\) such that conditions \eqref{e:maincondn2}-\eqref{e:maincondn3} are satisfied. The scalar \(\varepsilon_{i}\) gives a measure of the ``closeness'' of the set of matrices \(\{A_{\is},A_{\iu}\}\) to a set of matrices \(\{A'_{\is},A'_{\iu}\}\) such that \(A'_{\is}\) and \(A'_{\iu}\) commute. In the simplest case when the matrices \(A_{\is}\) and \(A_{\iu}\) themselves commute (i.e., \(\varepsilon_{i} = 0\)), condition \eqref{e:maincondn3} reduces to condition \eqref{e:maincondn1}. Towards proving Theorem \ref{t:mainres}, we utilize the switched system model of the individual plants described in \eqref{e:i-swsys}, and show that a \(\gamma\) obtained from Algorithm \ref{algo:scheduling} is a concatenation of \(N\) stabilizing switching logics. Notice that our stability conditions do not ask for strict commutativity between the open-loop and closed-loop modes, \(A_{\iu}\) and \(A_{\is}\), of the plants, but rely on the ``closeness'' of these matrices to a set \(\{A'_{\is},A'_{\iu}\}\) such that \(A'_{\is}\) and \(A'_{\iu}\) commute, \(i\in\{1,2,\ldots,N\}\). This feature adds an inherent robustness to our stability conditions in the sense that if we are working with approximate models of \(A_{\is}\) and \(A_{\iu}\), or the elements of \(A_{\is}\) and \(A_{\iu}\) are prone to change over time, then GES of plant \(i\) is preserved under our scheduling policies as long as the Euclidean norm of \(E_{\is\iu}\) is small enough to satisfy condition \eqref{e:maincondn3}.

    \begin{remark}
    \label{rem:swsys}
    \rm{
        Switched systems have appeared before in the scheduling literature, see e.g., \cite{Hristu2001,Lin2005,abc} and the references therein. In particular, the class of average dwell time switching logics is proven to be a useful tool in the design of stabilizing scheduling policies for NCSs with continuous-time plants, see e.g., \cite{Lin2005}. In the discrete-time setting, recently in \cite{abc} we employed switched systems and graph theory to design stabilizing periodic scheduling policies for NCSs under ideal communication between the plants and their controllers. The design of such policies involves what is called \(T\)-contractive cycles on the underlying weighted directed graph of an NCS. In contrast, in this paper we accommodate jamming attacks to the shared network in our design of periodic scheduling policies, and rely on commutation relations between the stable and unstable dynamics of each plant to guarantee stability.
        }
    \end{remark}

     \begin{remark}
    \label{rem:swsys}
    \rm{
        Matrix commutators (Lie brackets) have been employed in the context of stability of switched systems earlier in the literature, see e.g., \cite{Narendra1994, Agrachev2012, xyz1, xyz2, ghi}. In this paper we employ this tool towards achieving simultaneous stability of \(N\) switched systems. The ``worst case'' behaviour of \(\kappa\) (i.e., in every \(k\) consecutive time instants, some or all of the control inputs are deactivated exactly at \(m\) time instants) and \(\gamma\) (i.e., \(i\) appears exactly in one \(v_{p}\)) impose an upper bound on the number of consecutive time instants with \(\sigma_{i}(t) = \iu\) between every two instances of \(\sigma_{i}(t') = \sigma_{i}(t'') = \is\), satisfying \(\sigma_{i}(\tau) \neq \is\) for all \(\tau\in[t',t'']\), \(i=1,2,\ldots,N\). For each \(i\), we consider two subsystems --- one stable and one unstable, and employ the above bound in our design of stabilizing switching logics \(\sigma_{i}\), \(i=1,2,\ldots,N\). The design of stabilizing switching logics with constrained activation of unstable subsystems by employing (matrix) commutation relations between the subsystems matrices is addressed earlier in \cite{xyz2,ghi}. The stability conditions of \cite{xyz2,ghi} and the current paper are only sufficient, and differ because of the type of counting argument employed in their derivations.
    }
    \end{remark}

    \begin{remark}
    \label{rem:plant_selection}
    \rm{
        Notice that in Step I of Algorithm \ref{algo:scheduling}, {it holds that} if \(\displaystyle{N-\lfloor{N}/{M}\rfloor M}\)\(\displaystyle{\neq 0}\), then for \(\displaystyle{M-(N-\lfloor{N}/{M}\rfloor M})\)-many \(i\in\{1,2,\ldots,N\}\) there exist \(a_{i},b_{i}\in\{1,2,\ldots,r\}\) such that \(i\) {is} an element of both \(v_{a_{i}}\) and \(v_{b_{i}}\). Indeed, by construction, the vectors \(\displaystyle{v_{1},v_{2},\ldots,v_{\lfloor{N}/{M}\rfloor}}\) are assigned \(\displaystyle{\lfloor{N}/{M}\rfloor M}\)-many distinct elements, and the vector \(\displaystyle{v_{\lfloor{N}/{M}\rfloor+1}}\) is assigned the remaining \(\displaystyle{N-\lfloor{N}/{M}\rfloor M}\)-many elements and\\ \(\displaystyle{M-(N-}\)\({\lfloor{N}/{M}\rfloor M)}\)-many elements from the ones that appeared in \(v_{p}\), \(\displaystyle{p=1,2,\ldots}\),\\\({\lfloor{N}/{M}\rfloor}\). There is, therefore, an element of ``choice'' associated to the selection of the above \(M-(N-\lfloor{N}/{M}\rfloor M)\)-many elements. In this paper we are concerned with stability of each plant \(i\) in \eqref{e:plants}, and consequently, any valid selection serves our purpose. However, in general, the selection of \(\displaystyle{M-(N-\lfloor{N}/{M}\rfloor M)}\)-many elements from the ones that appeared in \(v_{p}\), \(\displaystyle{p=1,2,\ldots,\lfloor{N}/{M}\rfloor}\) can be governed by the type of qualitative or quantitative property of each plant that one wishes to be preserved by \(\gamma\). For instance, if a plant \(j\) is more unstable than the other plants \(\{1,2,\ldots,N\}\setminus\{j\}\), then one may want to include \(j\) in the \(\displaystyle{M-(N-\lfloor{N}/{M}\rfloor M)}\)-many elements described above to ensure more stable operations of the plant \(j\).
    }
    \end{remark}
\section{Numerical experiments}
\label{s:numex}
    \begin{example}
    \label{ex:numex1}
    \rm{
    Let \(N=5\). We generate the matrices \(A_{i}\in\R^{2\times 2}\), \(B_{i}\in\R^{2\times 1}\) and \(K_{i}\in\R^{1\times 2}\), \(i=1,2,3,4,5\) as follows:
    \begin{itemize}[label = \(\circ\), leftmargin = *]
        \item The elements of \(A_{i}\) are selected from the interval \([-2,2]\) uniformly at random.
        \item Elements of \(B_{i}\) are selected by picking values from the set \(\{0,1\}\).
        \item It is ensured that the pair \((A_{i},B_{i})\) is controllable, and \(K_{i}\) is the discrete-time quadratic regulator for \((A_{i},B_{i})\) with \(Q_{i} = Q = 5I_{2\times 2}\) and \(R_{i} = R = 1\).
    \end{itemize}
    The corresponding numerical values are given in Table \ref{tab:plant_data}. Let \(M = 2\), \(k=2\) and \(m=1\).
    \begin{table*}[htbp]
	\centering
	\begin{tabular}{|c | c | c|c|c|c|}
		\hline
		\(i\) & \(A_{i}\) & \(B_{i}\) & \(K_{i}\) & \(\abs{\lambda(A_{i})}\) & \(\abs{\lambda(A_{i}+B_{i}K_{i})}\)\\
		\hline
		\(1\) & \(\pmat{-0.1340 & -0.0076\\-0.0503 &  -1.0821}\) & \(\pmat{0\\1}\) & \(\pmat{0.0441 & 0.9277}\) & \(0.1336,1.0825\) & \(0.1319,0.1565\)\\
		\hline
		\(2\) & \(\pmat{-0.0107 & 0.0052\\0.4219 &  1.0993}\) & \(\pmat{0\\1}\) & \(\pmat{-0.3613 & -0.9434}\) & \(0.0127,1.1013\) & \(0.0126,0.1578\)\\
		\hline
		\(3\) & \(\pmat{0.6505 & 0.4401\\0.6510 &  0.4197}\) & \(\pmat{1\\1}\) & \(\pmat{-0.5968 & -0.3945}\) & \(1.0827,0.0125\) & \(0.0912,0.0123\)\\
		\hline
		\(4\) & \(\pmat{-1.3188 & -0.1959\\0.0008 &  -0.0244}\) & \(\pmat{1\\0}\) & \(\pmat{1.1431 & 0.1705}\) & \(1.3187,0.0245\) & \(0.1756,0.0245\)\\
		\hline
		\(5\) & \(\pmat{-1.0079 & -0.0455\\0.0266 &  0.6821}\) & \(\pmat{1\\0}\) & \(\pmat{0.8605 & 0.0201}\) & \(1.0072,0.6814\) & \(0.1466,0.6813\)\\
		\hline
	\end{tabular}
    \vspace*{0.2cm}
	\caption{Description of individual plants in the NCS}\label{tab:plant_data}
	\end{table*}
    We have \(\norm{A_{1_{s}}} = 0.1565\), \(\norm{A_{1_{u}}} = 1.0834\), \(\norm{E_{1_{s}1_{u}}} = 0.0071\), \(\norm{A_{2_{s}}} = 0.1673\), \(\norm{A_{2_{u}}} = 1.1775\), \(\norm{E_{2_{s}2_{u}}} = 0.0053\), \(\norm{A_{3_{s}}} = 0.0916\), \(\norm{A_{3_{u}}} = 1.1030\), \(\norm{E_{3_{s}3_{u}}} = 0.0102\), \(\norm{A_{4_{s}}} = 0.1775\), \(\norm{A_{4_{u}}} = 1.3333\), \(\norm{E_{4_{s}4_{u}}} = 0.0032\), \(\norm{A_{5_{s}}} = 0.6834\), \(\norm{A_{5_{u}}} = 1.0119\), \(\norm{E_{5_{s}5_{u}}} = 0.0209\). Also, \(r=3\), \(\delta = 1\), \(\alpha(\delta) = 3\), \(\zeta(\delta) = 7\), and \(\eta(\delta) = 7\).

    We pick \(\rho = 0.7\) and \(\lambda_{i} = 0.0001\), \(i=1,2,3,4,5\), which leads to
    \[
        \rho e^{\lambda_{i}\delta} = 0.0370,\:\:i=1,2,3,4,5.
    \]
    Let \(\varepsilon_{1} = 0.008\), \(\varepsilon_{2} = 0.006\), \(\varepsilon_{3} = 0.02\), \(\varepsilon_{4} = 0.004\), \(\varepsilon_{5} = 0.03\).

    From the above set of numerical values, we obtain
    \begin{align*}
        \rho e^{\lambda_{1}\delta} + \eta(\delta)\norm{A_{1_{s}}}^{\delta-1}\norm{A_{1_{u}}}^{\zeta(\delta)}\varepsilon_{1}e^{\lambda_{1}r\alpha(\delta)} &= 0.7981,\\
         \rho e^{\lambda_{2}\delta} + \eta(\delta)\norm{A_{2_{s}}}^{\delta-1}\norm{A_{2_{u}}}^{\zeta(\delta)}\varepsilon_{2}e^{\lambda_{2}r\alpha(\delta)} &= 0.8318,\\
          \rho e^{\lambda_{3}\delta} + \eta(\delta)\norm{A_{3_{s}}}^{\delta-1}\norm{A_{3_{u}}}^{\zeta(\delta)}\varepsilon_{3}e^{\lambda_{3}r\alpha(\delta)} &= 0.9781,\\
           \rho e^{\lambda_{4}\delta} + \eta(\delta)\norm{A_{4_{s}}}^{\delta-1}\norm{A_{4_{u}}}^{\zeta(\delta)}\varepsilon_{4}e^{\lambda_{4}r\alpha(\delta)} &= 0.9098,\\
            \rho e^{\lambda_{5}\delta} + \eta(\delta)\norm{A_{5_{s}}}^{\delta-1}\norm{A_{5_{u}}}^{\zeta(\delta)}\varepsilon_{5}e^{\lambda_{5}r\alpha(\delta)} &= 0.9282.
    \end{align*}
    Consequently, conditions \eqref{e:maincondn1}-\eqref{e:maincondn3} hold, and the assertion of Theorem \ref{t:mainres} follows. We now employ Algorithm \ref{algo:scheduling} to design scheduling policies for the NCS under consideration.\\
    Step I: We obtain \(v_{1} = \{1,3\}\), \(v_{2} = \{2,5\}\) and \(v_{3} = \{1,4\}\).\\
    Step II: A scheduling policy \(\gamma\) is designed with the above choice of \(v_{p}\), \(p=1,2,3\). Then \(10\) different attack signals \(\kappa\in\A(m,k)\) are generated randomly. Corresponding to each \(\kappa\) above, we plot \((\norm{x_{i}(t)})_{t\in\N_{0}}\) for \(10\) different initial conditions \(x_{i}(0)\) chosen uniformly at random from the interval \([-1,1]^{2}\), \(i=1,2,\ldots,N\). The plots are illustrated in Figures \ref{fig:plant1}-\ref{fig:plant5}.
    \begin{figure}
    \centering
        \includegraphics[scale = 0.35]{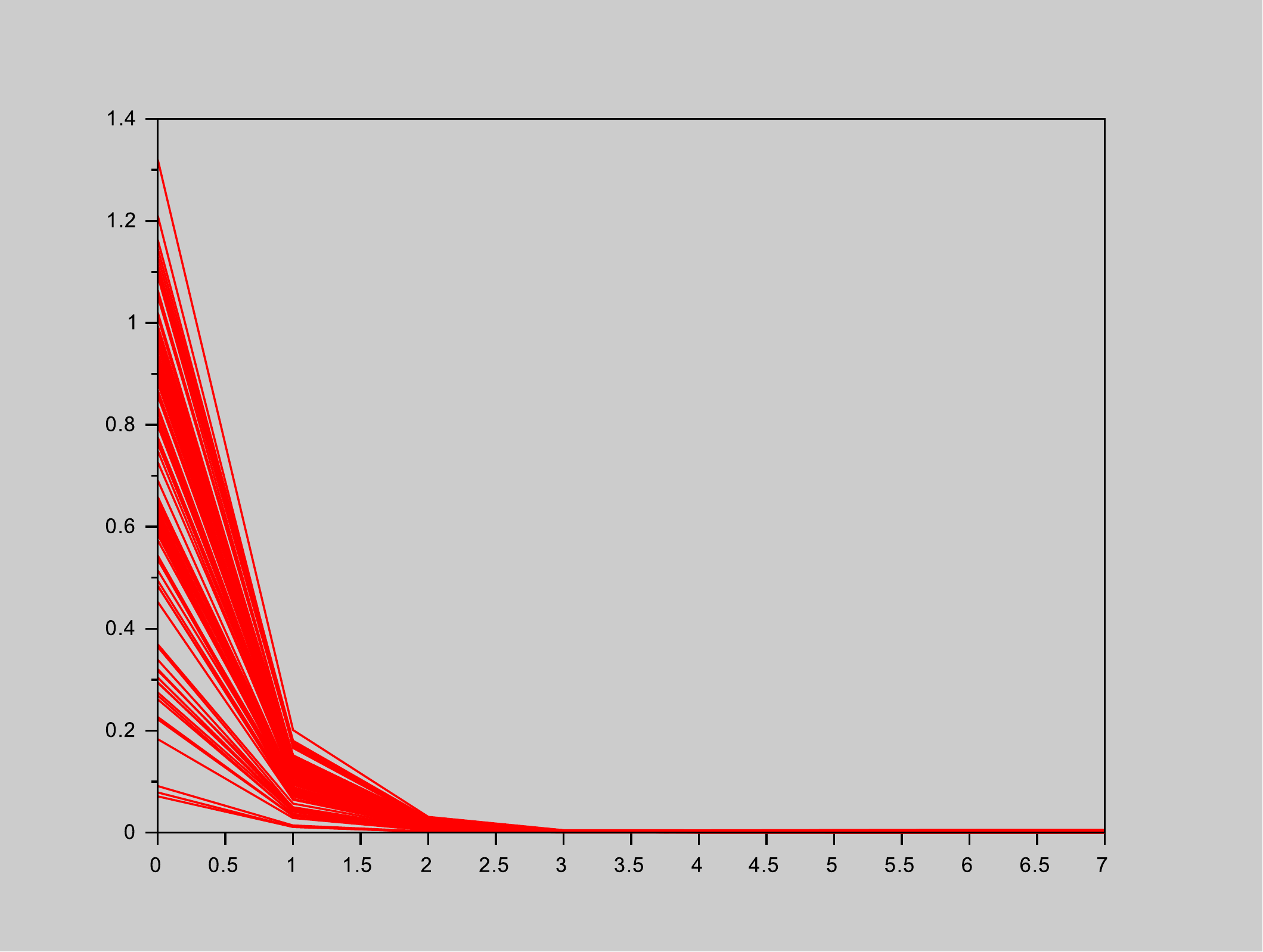}
        \caption{Plot of \((\norm{x_{1}(t)})_{t\in\N_{0}}\)}\label{fig:plant1}
    \end{figure}
    \begin{figure}
    \centering
        \includegraphics[scale = 0.35]{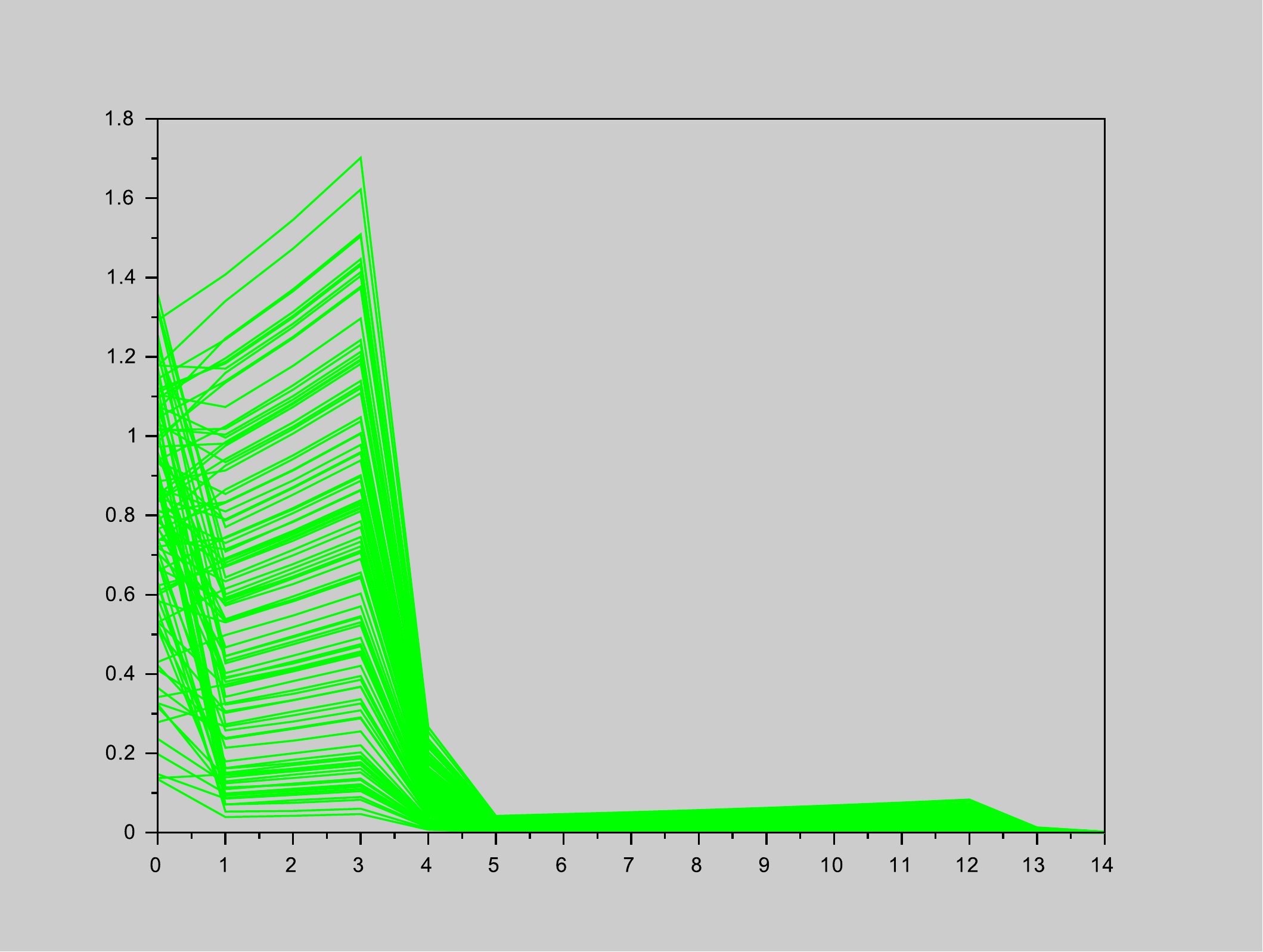}
        \caption{Plot of \((\norm{x_{2}(t)})_{t\in\N_{0}}\)}\label{fig:plant2}
    \end{figure}
    \begin{figure}
    \centering
        \includegraphics[scale = 0.35]{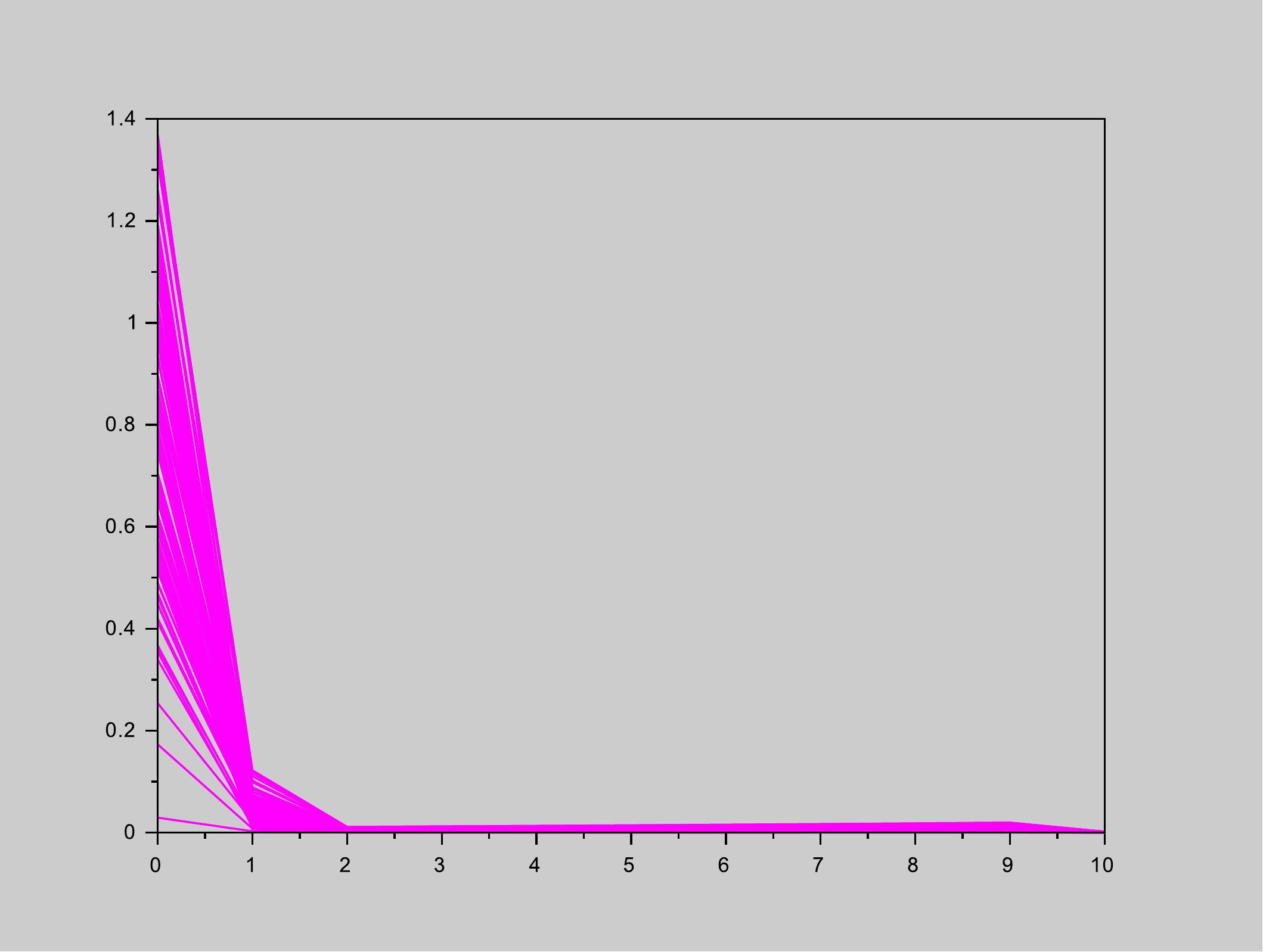}
        \caption{Plot of \((\norm{x_{3}(t)})_{t\in\N_{0}}\)}\label{fig:plant3}
    \end{figure}
    \begin{figure}
    \centering
        \includegraphics[scale = 0.35]{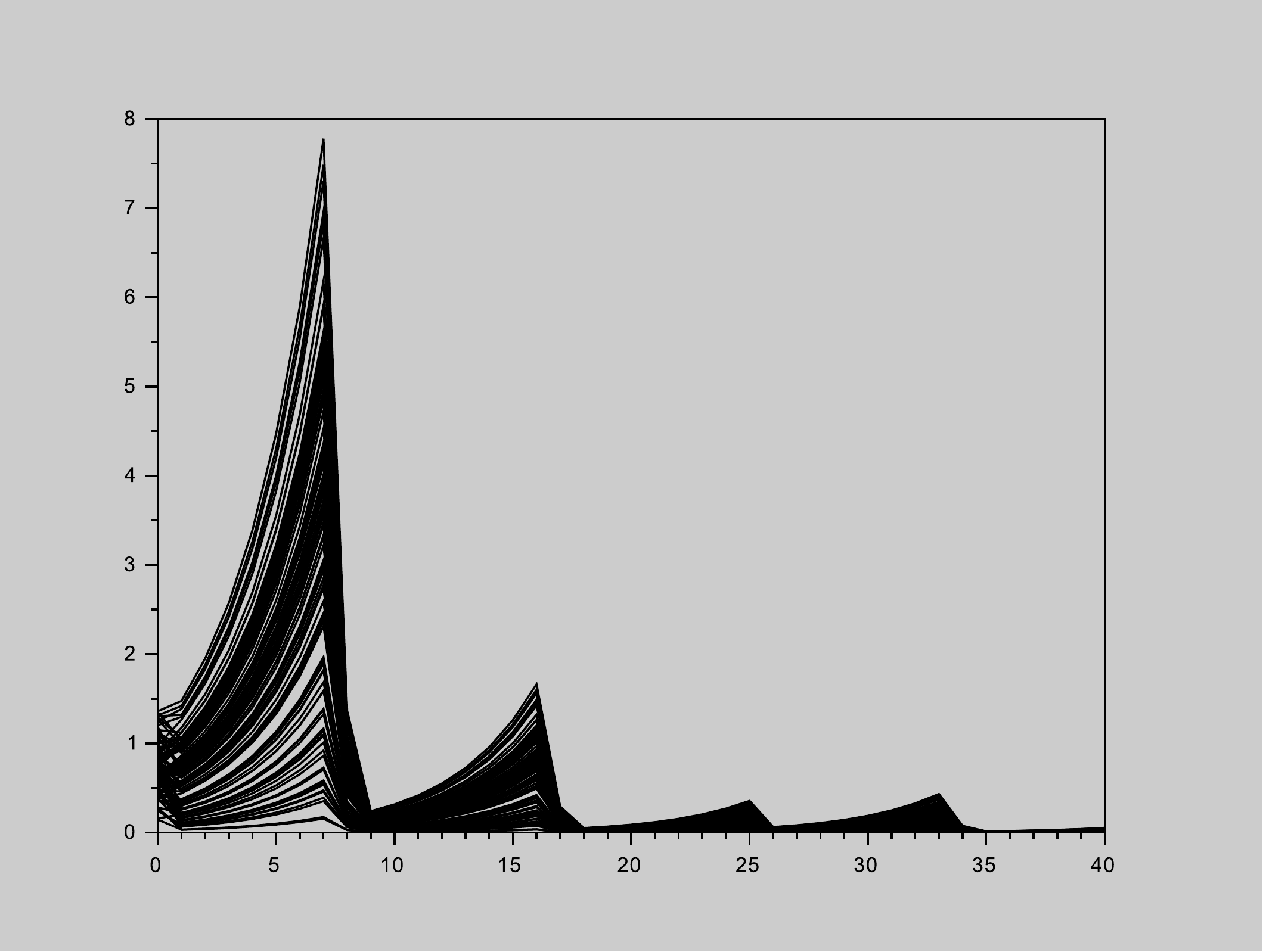}
        \caption{Plot of \((\norm{x_{4}(t)})_{t\in\N_{0}}\)}\label{fig:plant4}
    \end{figure}
    \begin{figure}
    \centering
        \includegraphics[scale = 0.35]{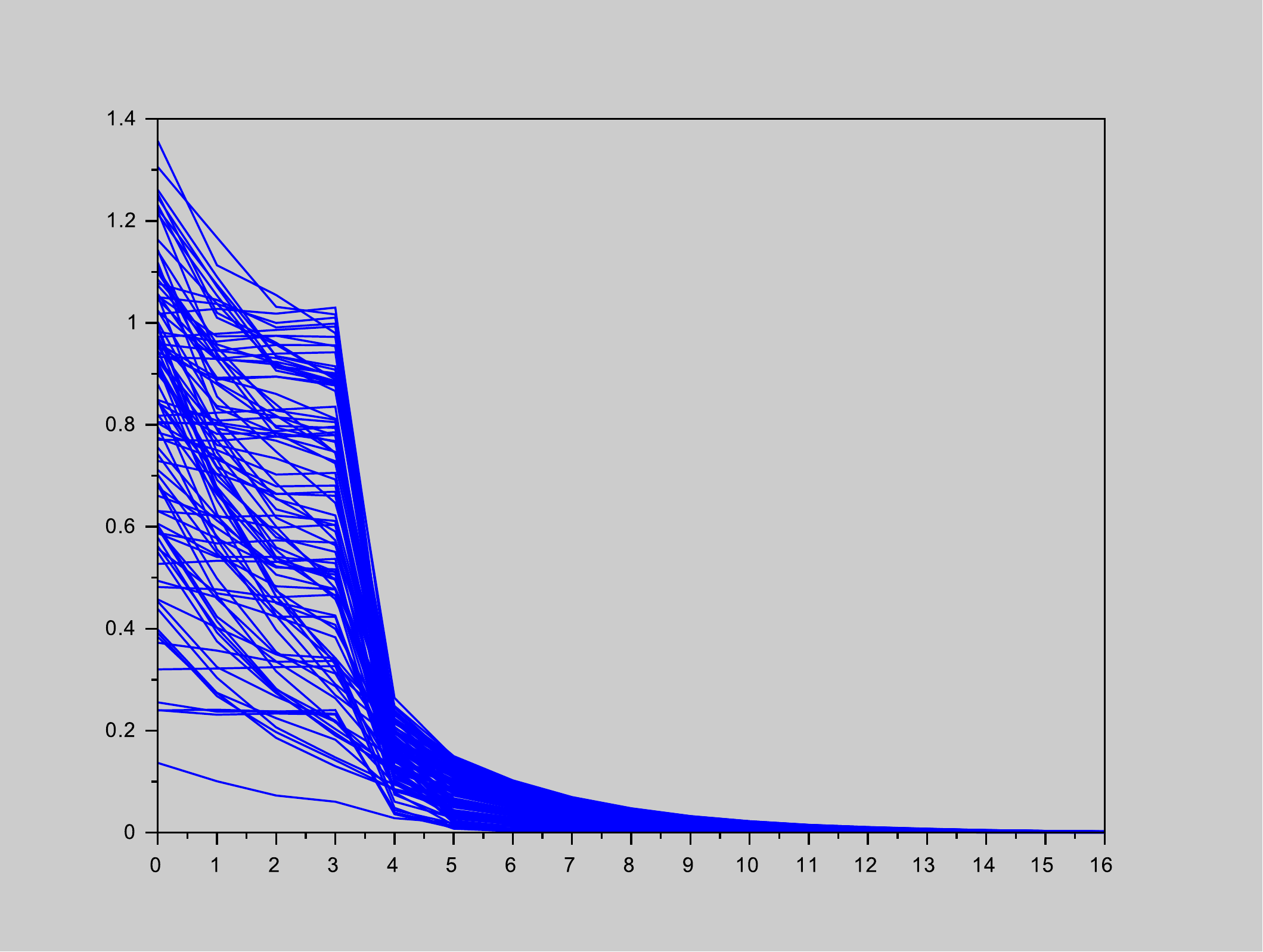}
        \caption{Plot of \((\norm{x_{5}(t)})_{t\in\N_{0}}\)}\label{fig:plant5}
    \end{figure}
    }
    \end{example}

    \begin{example}
    \label{ex:numex2}
    \rm{
        We now perform a set of experiments to demonstrate the scalability of our stability conditions. Notice that if the matrices \(A_{\is}\) and \(A_{\iu}\) commute (i.e., \(\varepsilon_{i} = 0\)), \(i\in\{1,2,\ldots,N\}\), then condition \eqref{e:maincondn3} holds trivially. We are, however, interested in the non-commuting case, i.e., estimates of upper bounds on the scalar \(\varepsilon_{i}\), \(i\in\{1,2,\ldots,N\}\) for large-scale settings. We study the following two cases:
        \begin{enumerate}[label = \Roman*., leftmargin = *]
            \item We fix \(\norm{A_{\is}} = 0.9\), \(\norm{A_{\iu}} = 1.1\), \(\rho = 0.9\), \(\lambda_{i} = 0.00001\), \(k=3\), \(m=2\), and vary \(\delta\) and \(r\) to compute maximum values of \(\varepsilon_{i}\) under which condition \eqref{e:maincondn3} holds. The corresponding plot is given in Figure \ref{fig:ls-plot1}.
            \item We fix \(r=2\), \(k=5\), \(\norm{A_{\is}} = 0.9\), \(\norm{A_{\iu}} = 1.1\), \(\rho = 0.9\), \(\lambda_{i} = 0.00001\), and vary \(\delta\) and \(m\) \((m < k)\) to compute maximum values of \(\varepsilon_{i}\) for the satisfaction of \eqref{e:maincondn3}. The corresponding plot is given in Figure \ref{fig:ls-plot2}.
        \end{enumerate}
        Not surprisingly, it is observed that the size of the set of non-commuting pairs of \(A_{\is}\) and \(A_{\iu}\) that Algorithm \ref{algo:scheduling} caters to, shrinks with an increase in \(r\) (i.e., with how large \(N-M\) is), and the attack signal parameters, \(k\) and \(m\).
    }
    \end{example}
    \begin{figure}
    \centering
        \includegraphics[scale = 0.35]{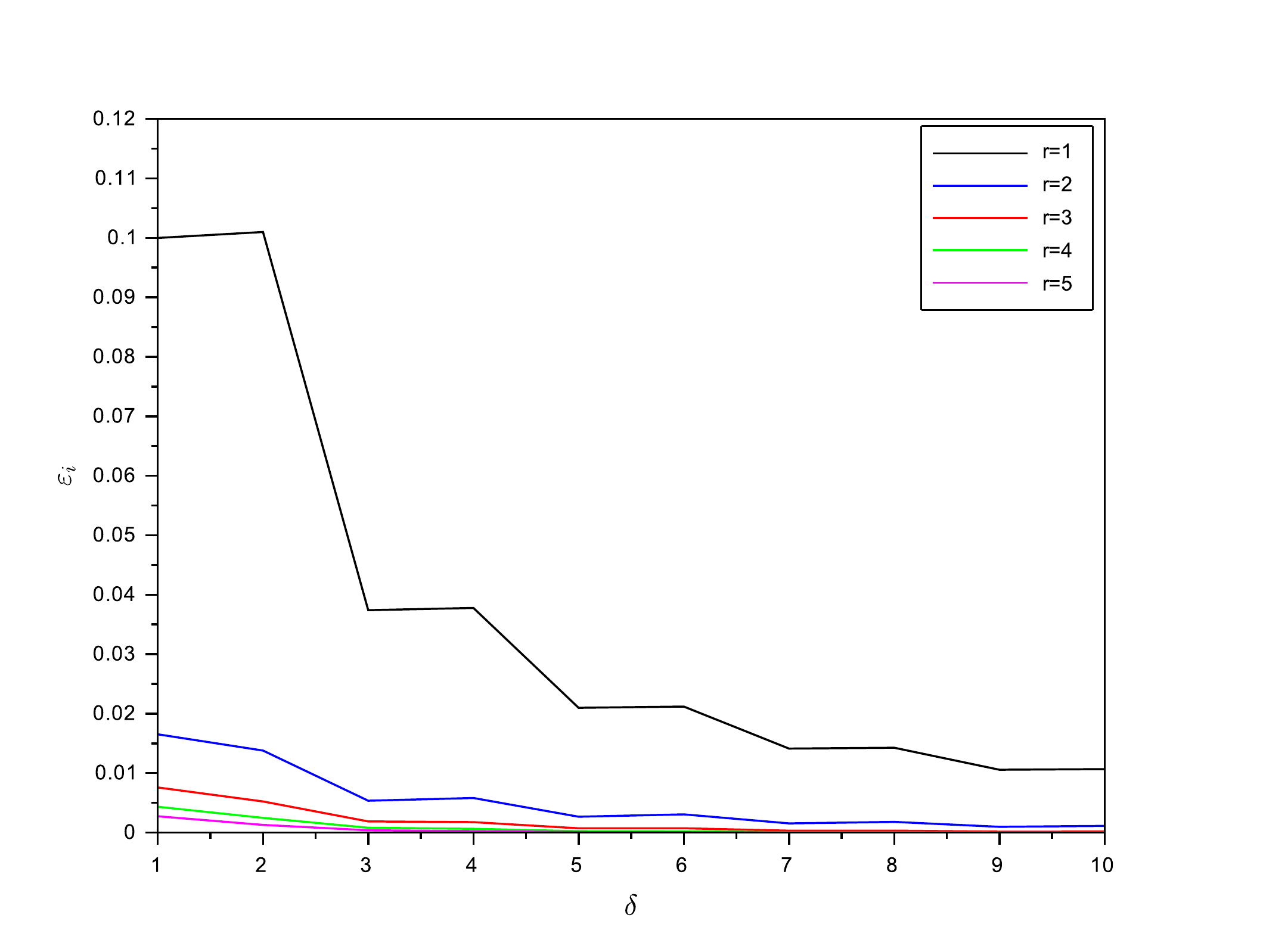}
        \caption{Plot of maximum values of \(\varepsilon_{i}\) for the satisfaction of \eqref{e:maincondn3}}\label{fig:ls-plot1}
    \end{figure}
    \begin{figure}
    \centering
        \includegraphics[scale = 0.35]{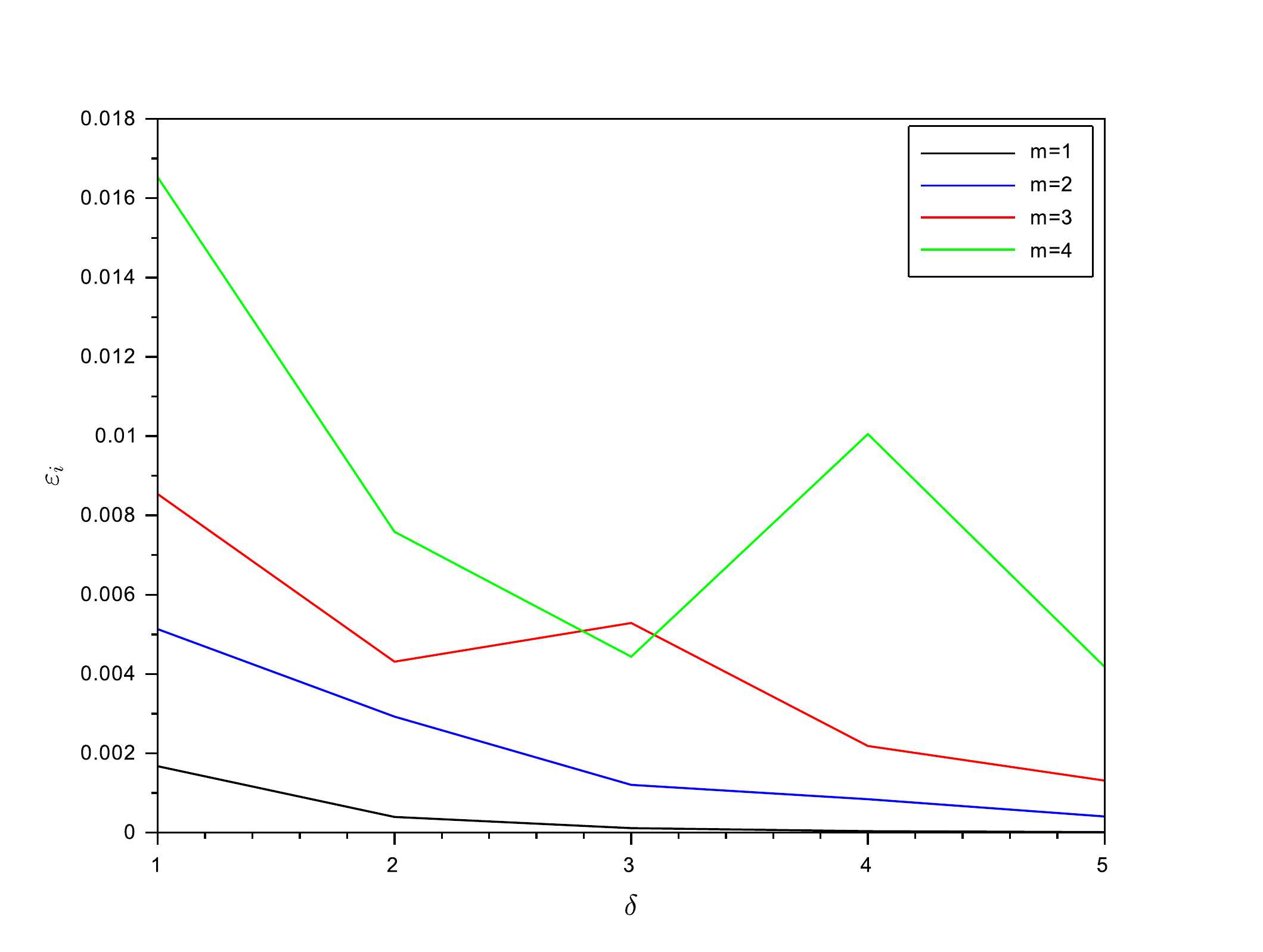}
        \caption{Plot of maximum values of \(\varepsilon_{i}\) for the satisfaction of \eqref{e:maincondn3}}\label{fig:ls-plot2}
    \end{figure}

\section{Conclusion}
\label{s:concln}	
    In this paper we designed stabilizing scheduling policies for NCSs whose communication networks have limited bandwidth and are vulnerable to jamming attacks. A next natural question is regarding the controller design: given the pairs of matrices \((A_{i},B_{i})\), \(i=1,2,\ldots,N\), and the numbers \(M\), \(m\) and \(k\), how do we choose \(K_{i}\), \(i=1,2,\ldots,N\), such that each plant \(i\) in \eqref{e:plants} satisfies conditions \eqref{e:maincondn2}-\eqref{e:maincondn3}? This problem is currently under investigation and the findings will be reported elsewhere.



\end{document}